\def\BibTeX{{\rm B\kern-.05em{\sc i\kern-.025em b}\kern-.08em
    T\kern-.1667em\lower.7ex\hbox{E}\kern-.125emX}}
\let\sigproof\proof\let\proof\relax
\let\sigendproof\endproof\let\endproof\relax
\let\proof\sigproof
\let\endproof\sigendproof
\theoremstyle{sig}
\newtheorem{theorem}{Theorem}
\begin{document}

\title{\textsc{Spin}: Accelerating Large Language Model Inference with Heterogeneous Speculative Models}

\author{
\IEEEauthorblockN{
Fahao Chen\IEEEauthorrefmark{1},
Peng Li\IEEEauthorrefmark{2},
Tom H. Luan\IEEEauthorrefmark{2},
Zhou Su\IEEEauthorrefmark{2}, and
Jing Deng\IEEEauthorrefmark{3}}
\IEEEauthorblockA{\IEEEauthorrefmark{1}School of Computer Science and Engineering, The University of Aizu, Japan}
\IEEEauthorblockA{\IEEEauthorrefmark{2}School of Cyber Science and
Engineering, Xi’an Jiaotong University, China}
\IEEEauthorblockA{\IEEEauthorrefmark{3}Department of Computer Science, University of North Carolina at Greensboro, USA}
\IEEEauthorblockA{Emails: c24fahao@u-aizu.ac.jp, pengli@xjtu.edu.cn, tom.luan@xjtu.edu.cn, zhousu@ieee.org, jing.deng@uncg.edu}
}

\maketitle

\begin{abstract}
Speculative decoding has been shown as an effective way to accelerate Large Language Model (LLM) inference by using a Small Speculative Model (SSM) to generate candidate tokens in a so-called speculation phase, which are subsequently verified by the LLM in a verification phase. However, current state-of-the-art speculative decoding approaches have three key limitations: handling requests with varying difficulty using homogeneous SSMs, lack of robust support for batch processing, and insufficient holistic optimization for both speculation and verification phases. In this paper, we introduce \textsc{Spin}, an efficient LLM inference serving system based on speculative decoding, designed to address these challenges through three main innovations. First, \textsc{Spin} improves token speculation by using multiple heterogeneous SSMs, with a learning-based algorithm for SSM selection that operates without prior knowledge of request difficulty. Second, \textsc{Spin} employs a request decomposition method to minimize batching overhead during LLM verification. Finally, \textsc{Spin} orchestrates speculation and verification phases by pipelining their executions on GPUs to achieve further acceleration. Experimental results demonstrate that \textsc{Spin} significantly outperforms state-of-the-art methods, achieving a performance increase of approximately 2.28$\times$.
\end{abstract}

\begin{IEEEkeywords}
Large language models, speculative decoding, batch processing, learning-based algorithm
\end{IEEEkeywords}

\section{Introduction}
Large Language Models (LLMs), such as the GPT series~\cite{brown2020language, achiam2023gpt}, have demonstrated remarkable performance across a variety of generative tasks~\cite{zhao2023survey, fu2022tcb, wen2024autodroid} and efficient LLM inference has obtain significant research attention~\cite{wang2023tabi, liu2023deja, Sarathi2024, DistServer2024, Llumnix2024, InfiniGen2024, chen2024giant}.
Recently, speculative decoding~\cite{chen2023accelerating, leviathan2023fast} has been proposed to accelerate LLM inference by using a small speculative model (SSM) in conjunction with the LLM, as shown in \autoref{fig:intro_motivation}(a). The speculative decoding process begins with the SSM generating candidate tokens in a \textit{speculation} phase. Subsequently, these candidate tokens are verified by the LLM in a \textit{verification} phase. The SSM operates with high speed due to its small size, and verification can be efficiently completed with a single forward pass of the LLM, allowing all speculative tokens to be verified in parallel.

Existing research efforts on speculative decoding can be classified into two primary approaches. The first approach aims to improve the speculation phase by launching multiple homogeneous SSM instances, as shown in \autoref{fig:intro_motivation}(b), to generate more tokens with a higher probability to be accepted by the LLM. For instance, SpecInfer~\cite{miao2024specinfer} and Medusa~\cite{cai2024medusa} use a tree-based speculation policy, generating multiple candidate tokens at the same position to increase the acceptance rate. EGALE~\cite{li2024eagle} enhances the SSM with additional inputs, specifically the intermediate results of some LLM layers, to further boost the acceptance rate. The second approach focuses on optimizing the verification phase by reducing the inference cost of the LLM. For example, LayerSkip~\cite{elhoushi2024layer} and EESD~\cite{liu2024speculative} employ a portion of the LLM as the SSM, allowing intermediate results from the speculation phase to be reused during verification, thus reducing the overall inference cost of the LLM.

\begin{figure}[t] 
\begin{center}
\includegraphics[width=\linewidth]{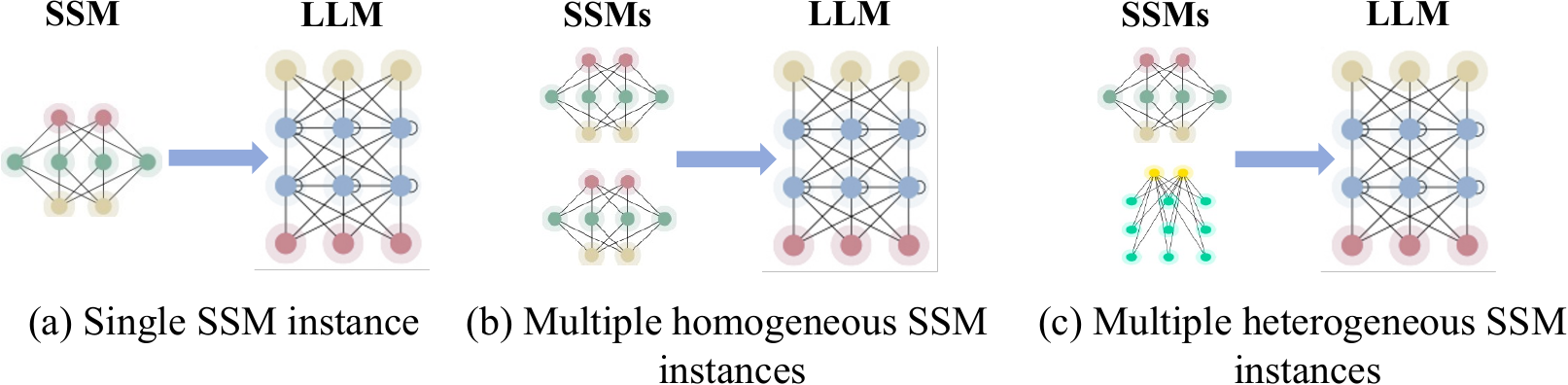}
\caption{\label{fig:intro_motivation}Illustration of different speculative decoding approaches.}
\end{center}
\end{figure}

Despite the potential of speculative decoding, existing works do not fully exploit its capabilities for accelerating LLM inference, primarily due to three reasons.
\underline{First}, during the speculation phase, existing works~\cite{chen2023accelerating, leviathan2023fast} typically employ homogeneous SSMs, 
as shown in \autoref{fig:intro_motivation}(b), which is insufficient for handling requests of varying difficulty. Our observations indicate that some requests, such as commonsense questions, are relatively ``easy'' and can be effectively handled by smaller SSMs with a high probability of generating acceptable tokens. Conversely, some ``difficult'' requests require larger SSMs to produce candidate tokens that the LLM is more likely to accept.
\underline{Second}, during the verification phase, existing works do not robustly support batch-based verification by the LLM~\cite{chen2023accelerating, leviathan2023fast}. It has been shown that the LLM severely under-utilizes GPU resources when processing a single request. Batching multiple requests for parallel processing is a common practice to increase GPU utilization and throughput. However, our experimental results (\S\ref{motivation}) demonstrate that enabling batching requests during the verification phase degrades the speedup of speculative decoding by about 30\%. This degradation is primarily due to the varying lengths of requests, which necessitate padding with zeros to align them for GPU processing, thereby wasting GPU memory and computational resources.
\underline{Third}, existing works lack a holistic optimization approach for both speculation and verification phases. They either focus on improving the acceptance rate during the speculation phase or on accelerating the verification inference. This lack of comprehensive optimization prevents existing methods from efficiently orchestrating speculation and verification executions within a serving system, thus limiting the overall inference throughput.

\textbf{Goal:} In this paper, we present \textsc{Spin}, an efficient LLM inference serving system based on holistic-optimized speculative decoding by addressing the above limitations. \textsc{Spin} operates upon a GPU cluster and is capable of handling a variety of inference requests, such as multi-turn conversations and code generation. We propose to adopt multiple heterogeneous SSMs in the speculation phase, as shown in \autoref{fig:intro_motivation}(c), rather than relying on homogeneous SSMs, ensuring that each request is matched with the most appropriate SSM. Moreover, we propose a fast batch verification scheme for speculative decoding, which can significantly accelerate LLM verification while maintaining inference accuracy. 

\textbf{Technical challenges:} 
Harnessing the potential of the aforementioned designs involves overcoming three critical challenges. The first one is how to select an appropriate SSM for each request, whose difficulty is unknown. The accuracy of speculative tokens depends on the current input and SSM capability, which is hard to be qualified or quantified before real execution. 
Second, simply batching requests for LLM verification introduces excessive padding tokens, a problem also encountered in traditional transformer models for language processing~\cite{fang2021turbotransformers, zhai2023bytetransformer}. A common solution is to batch requests with similar lengths, however, which is less effective here because speculative decoding exacerbates the length discrepancies between requests, and their lengths change dynamically.
Third, using heterogeneous SSMs may cause significant execution synchronization issues. In the default design, LLM verification cannot start until all candidate tokens are received from SSMs. Since these heterogeneous SSMs may work at different speeds, the slowest SSM can delay the entire system.

\textbf{Solution:} \textsc{Spin} addresses the above technical challenges with three novel designs:
\begin{itemize}
    \item We design a learning-based SSM selection algorithm, which enables \textsc{Spin} to choose suitable SSMs for inference requests to maximize speculative decoding performance (\S\ref{algorithm_idea}). We model the SSM selection problem as a multi-armed bandit and propose an efficient algorithm to explore each SSM's performance, without any knowledge about request difficulty or model features.
    \item \textsc{Spin} reduces batching overhead by decomposing long requests into short ones, which could be well aligned with fewer padded tokens. In order to guarantee the inference correctness after such decomposition, we propose a new attention computation scheme.
    \item To orchestrate speculation and verification phases for further accleration, \textsc{Spin} divides the original batch into micro-batches and pipelines their executions on SSMs and the LLM. This design significantly reduces the idle time of LLM during synchronization of heterogeneous SSMs.
\end{itemize}

We evaluate \textsc{Spin} using five SSMs from the LLaMA series, ranging from 68M to 1.4B. 
The extensive experiments on various LLMs and different datasets shows that \textsc{Spin} achieves 2.28$\times$ improvement on the inference throughput over existing baselines.

\section{Background and Motivation}
\subsection{Background}
\subsubsection{Autoregressive Decoding} 
The generative LLM inference typically uses an autoregressive decoding approach using the transformer architecture~\cite{vaswani2017attention}: the model iteratively processes the input sequence along with all previously generated tokens to produce the next token until it generates an ``end-of-sequence (EOS)'' token. 
The transformer model consists of multiple blocks, where each block includes a self-attention layer and a feed-forward network (FFN) layer. For a request with $n$ tokens, denoted by $S=[s_{1}, s_{2},..., s_{n}]$, each layer first applies projections on each token through three learnable linear matrices $W^{Q}, W^{K}, W^{V}$, to generate queries, keys, and values matrices:
\begin{align}
    Q=W^{Q}S, \mbox{ }K=W^{K}S, \mbox{ }V=W^{V}S.
\end{align}
Then, each layers calculate the output $O_{i}$ of the self-attention layer for each token by:
\begin{align}
    &O_{i} = \sum_{j=1}^{n}a_{i,j}V_{j}, \mbox{ } a_{i,j} = \frac{\mathcal{F}(Q_{i}, K_{j})}{\sum_{j=1}^{n}\mathcal{F}(Q_{i}, K_{j})}, \label{eq-Oa}
\end{align}
where $a_{i,j}$ is the attention score between token $s_{i}$ and $s_{j}$. Typically, we conduct dot-product attention with Softmax normalization by setting $\mathcal{F}(Q_{i}, K_{j}) = \exp(Q_{i}K_{j}^{T})$. The self-attention outputs are then projected by the FFN layer and forwarded to the next transformer block as inputs. The outputs of the final transformer block are a probability vector to mark out the most probable output tokens.

During the generation, the intermediate results (key and value) of generated tokens are used in the self-attention layer for generating the subsequent tokens. To avoid the redundant computation, existing inference engines~\cite{kwon2023efficient, yu2022orca, llumnix} store these intermediate results in the GPU memory and reuse them in the following computation, which is referred to as KV cache. 


\subsubsection{Speculative Decoding}
The entire speculative decoding process includes multiple iterations, where each iteration involves a \textit{speculation phase} and a \textit{verification phase}. The speculation phase is performed by the SSM to generate multiple candidate tokens. After that, the LLM verifies these candidate tokens in the verification phase. If these tokens are consistent with LLM's output, they are accepted to be a part of the final output. The speculation and verification phases will repeat until an EOS token is accepted by the LLM.
Since the speculation is performed by the SSM, which operates significantly faster than the LLM, the more tokens accepted by the LLM from the SSM, the greater the speedup for the LLM's inference process.

\begin{figure}[t] 
\begin{center}
\includegraphics[width=\linewidth]{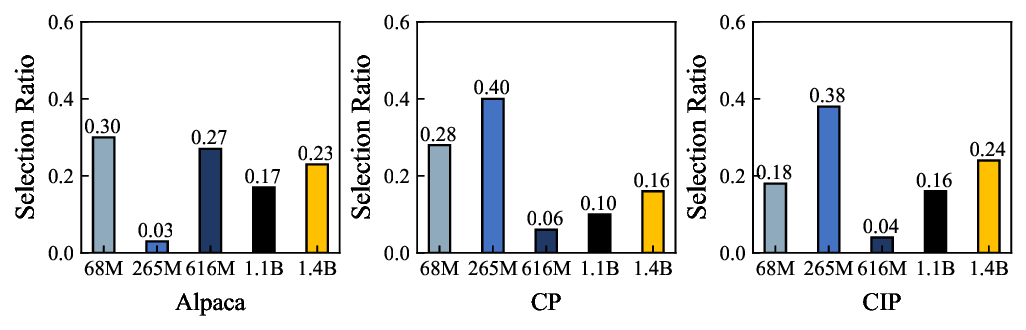}
\caption{\label{fig:single_SSM}The ratio of different SSMs selected as the best model across requests in three datasets.}
\end{center}
\end{figure}
\begin{figure}[t] 
\begin{center}
\includegraphics[width=\linewidth]{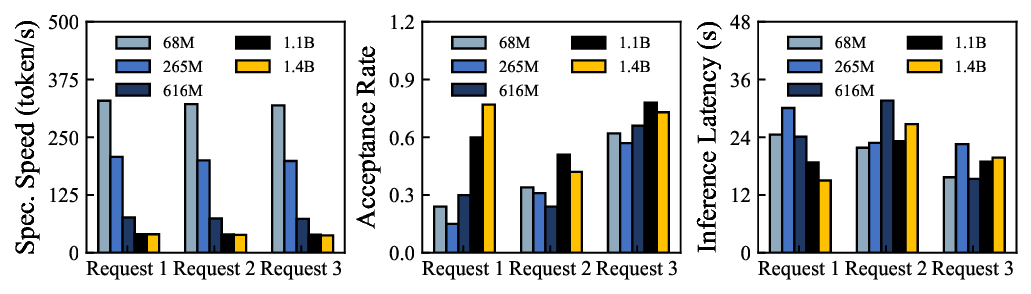}
\caption{\label{fig:challenge_selection}Results of three random requests using different SSMs.}
\end{center}
\end{figure}
\subsection{Motivation}\label{motivation}
\subsubsection{Limitation of Homogeneous SSMs}

Existing works mainly perform the speculative decoding using homogeneous SSMs, which fails to achieve the optimal benefits for all inference requests. We conduct experiments to demonstrate this issue. Specifically, we use LLaMA-7B as the LLM and select five SSMs from the LLaMA series, whose parameter scales range from 65 million to 1.4 billion. We evaluate the performance of speculative decoding across three datasets: Alpaca~\cite{alpaca}, ChatGPT Prompts (CP)~\cite{cp}, and Chatbot Instruction Prompts (CIP)~\cite{cip}. Inference latency is used as the performance metric.
\autoref{fig:single_SSM} reports the percentage of inference requests that achieve the best performance on each SSM. The results indicate that these SSMs play differently when handling these requests.
For example, only 3\% requests of the Alpaca dataset can achieve the best performance using the LLaMA-265M, but this model is preferred by 40\% requests of the CP dataset. 

To have deeper study about such differences, we randomly select 3 requests from the Alpaca dataset and show their speculation speed, acceptance rate, and inference latency under different SSMs in \autoref{fig:challenge_selection}. It can be seen that smaller SSM, e.g., LLaMA-68M, generates tokens much faster than larger ones, e.g., LLaMA-1.4B. Multiplying these with the acceptance rate, which is highly correlated to token prediction accuracy, the optimum SSM varies among these three requests.


The above results suggest that heterogeneous SSMs have great potential to accelerate speculative decoding. However, it is not easy to choose the best SSM for each request, because it is hard to accurately estimate token speculation capability of these SSMs. \textsc{Spin} addresses this challenge by proposing a learning-based algorithm that can perceive the request difficulty and choose appropriate SSMs accordingly. It does not rely on prior knowledge of SSM performance on requests.

\begin{figure}[t] 
\begin{center}
\includegraphics[width=\linewidth]{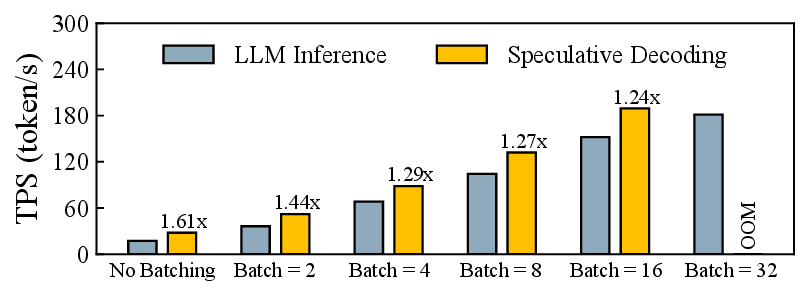}
\caption{\label{fig:moti_batch}The benefits of speculative decoding with different batch sizes.}
\end{center}
\end{figure}

\begin{figure}[t]
\begin{center}
\includegraphics[width=0.9\linewidth]{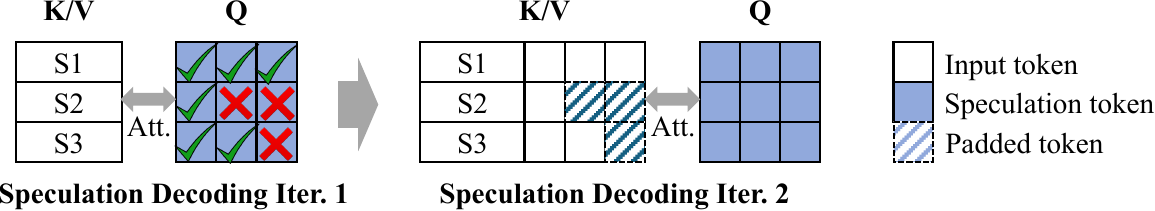}
\caption{\label{fig:moti_pad}Illustration of padding tokens in attention computation.}
\end{center}
\end{figure}
\subsubsection{Batching processing is not well supported by speculative decoding}

It has been shown that batching multiple requests can significantly improve GPU utilization and LLM inference throughput~\cite{yu2022orca}. However, when applying batching for speculative decoding, we find that it comes with increasing overhead as the batch size grows, which may overwhelm its benefits. As shown in \autoref{fig:moti_batch}, we measure the inference throughput, in terms of number of tokens output by LLM per second, under different batch sizes. Without batching, i.e., only 1 request is considered, speculative decoding can achieve 1.61$\times$ acceleration. However, such an acceleration decreases as we enlarge the batch size. Especially, when we set the batch size to 32, an out-of-memory error happens for speculative decoding while traditional inference can still work with growing throughput.

The reason of this inefficiency is that requests have different lengths, but CUDA kernels are designed to handle tensors with regular shapes. In order to align requests, additional padding tokens must be inserted to ensure tensors to be regular-shaped in the attention computation of the LLM forward pass. An example of padding tokens is shown in \autoref{fig:moti_pad}. These padding tokens contribute nothing to the final output, but waste GPU memory and computation resources.

Similar overhead of padding tokens also exists in traditional LLM inference without speculative decoding, as claimed by \cite{fang2021turbotransformers, zhai2023bytetransformer}. However, speculative decoding exacerbates length differences between batched requests since their token acceptance rates at the LLM may differ greatly, and thus more padding tokens are needed, leading to increased overhead. Moreover, although requests are with different length in traditional LLM inference, their length gaps are fixed once they are batched. In contrast, request lengths dynamically change in speculative decoding and thus that solutions proposed by \cite{fang2021turbotransformers, zhai2023bytetransformer} can hardly work here.

\begin{figure}[t] 
\begin{center}
\includegraphics[width=\linewidth]{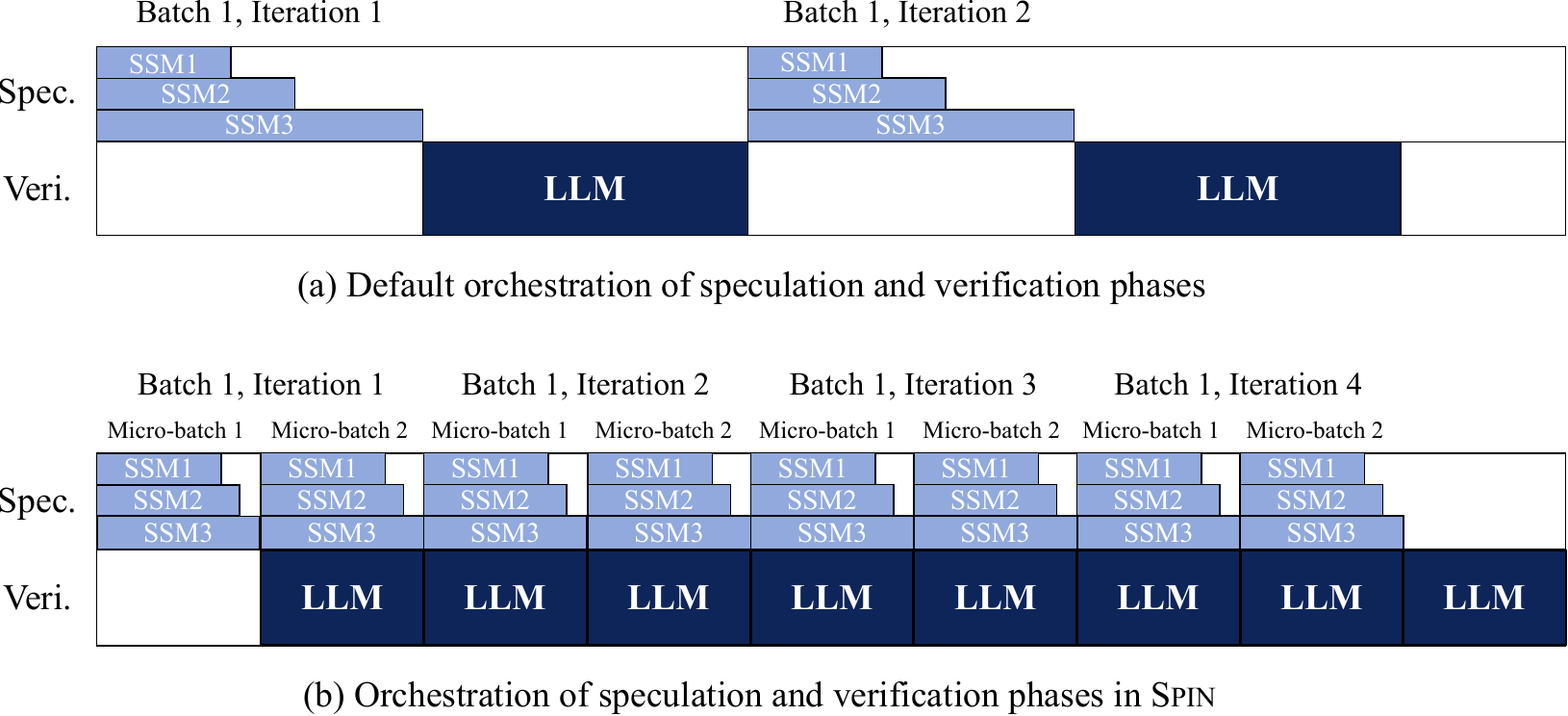}
\caption{\label{fig:moti_pipeline}The illustration of orchestration of speculation and verification.}
\end{center}
\end{figure}

\subsubsection{Speculation and verification need to be orchestrated} 
Token speculations of SSMs and token verification by the LLM are sequence-dependent, which could make GPUs idle and thus constrain the acceleration of speculative decoding, as shown in \autoref{fig:moti_pipeline}(a). Especially, heterogeneous SSMs have different running speeds and synchronizing their output would seriously postpone token verification for fast SSMs. SpecPIM \cite{li2024specpim} has proposed to let some speculative tokens skip the verification and directly use them in the next iteration. However, these tokens could be incorrect and waste the computing resources. In this paper, we propose to decompose the original batch into multiple micro-batches and pipeline their execution on SSMs and LLM. As shown in \autoref{fig:moti_pipeline}(b), we can significantly reduce the GPU idle time and guarantee the correctness of final output. However, determining how to effectively split the batch on each SSM into micro-batches is challenging, and we design an efficient heuristic to address this issue.

\section{System Overview}
\begin{figure}[t] 
\begin{center}
\includegraphics[width=\linewidth]{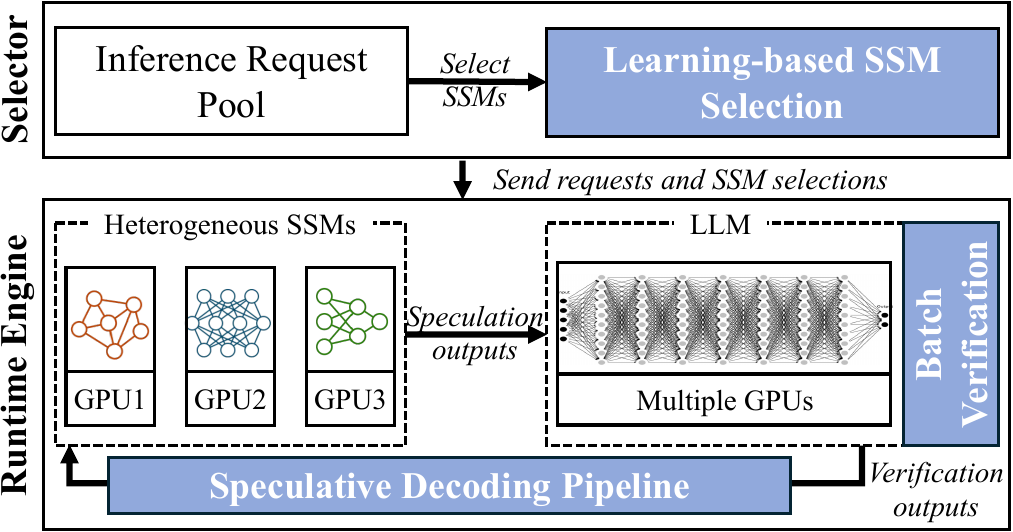}
\caption{\label{fig:overview}System Overview of \textsc{Spin}.}
\end{center}
\end{figure}

\textsc{Spin} is an LLM inference serving system based on speculative decoding~\cite{chen2023accelerating, leviathan2023fast} and batch processing~\cite{yu2022orca, kwon2023efficient}. The key design principle of \textsc{Spin} is to exploit heterogeneous SSMs to generate better speculative tokens for various requests.
An overview of \textsc{Spin} is shown in \autoref{fig:overview}, highlighting its two major modules. The first one is the \textsc{Spin}'s SSM selector (\S\ref{algorithm_idea}), which selects an appropriate SSM for each request using a learning-based selection algorithm that operates without prior knowledge of request difficulty. The second module is the runtime engine (\S\ref{engine}), responsible for managing the execution of heterogeneous SSMs and the LLM.
Requests are first assigned to SSMs, according to the decisions made by the selector. The speculative tokens generated by SSMs are sent to the LLM for verification. The request batching is enabled on both SSMs and LLM. To improve batching performance, \textsc{Spin} adopts a request decomposition method to reduce redundant padding tokens, particularly benefiting LLM verification. In addition, \textsc{Spin} uses a pipeline parallelism mechanism to orchestrate the speculation and verification phases for further acceleration. In practical deployment, we can launch multiple instances for each type of SSM, and each instance run on a dedicated GPU, because of the small size of SSM. The LLM operates on multiple GPUs using tensor parallelism, ensuring efficient utilization of GPU resources.

\section{Learning-based SSM Selection}\label{algorithm_idea}
In this section, we first present the problem statement, followed by the algorithm design as well as performance analysis. 

\subsection{Problem Statement}
We consider a set of inference requests, denoted by $\mathcal{N}=\{1,2,...,N\}$, which is submitted to the GPU cluster managed by \textsc{Spin}.
There are multiple heterogeneous SSMs to perform speculation, denoted by the set $\mathcal{M}=\{1,2,...,M\}$. Each request $i \in \mathcal{N}$ selects only one SSM and multiple requests using the same SSM can be batched together for simultaneous processing. The batch size of each SSM $j$ is denoted as $B_j$. 
We define a metric of \textit{goodput}, which indicates how many tokens are accepted by LLM per second, to evaluate the performance of speculative decoding. It is important to note that the goodput does not equal the speculation speed of the SSM since the generated tokens may be rejected by the LLM. Formally, the goodput of the speculative decoding when using SSM $j$ for request $i$ is denoted as $g_{i,j}$. 

We denote $x_{i}\in \mathcal{M}$ as the SSM selection for request $i\in \mathcal{N}$. The objective of the SSM selection problem is to maximize the total goodputs for all requests:
\begin{align}
    \max_{x}\quad & \sum_{i=1}^{N}g_{i, x_{i}}\\
    \mbox{s.t.}\quad & |\mathcal{N}_{j}|\leq B_{j}, \forall j \in \mathcal{M};\label{cons_1}\\
    & x_{i}=\{1,2,...,M\}, \forall i \in \mathcal{N},
\end{align}
where $\mathcal{N}_{j}=\{i|x_{i}=j\}$ is the set of requests running on SSM $j$. Constraint (\ref{cons_1}) ensures the batch size limitation on each SSM. However, the above optimization problem cannot be directly solved because the goodput $g_{i, x_{i}}$ is strongly related to the acceptance rate at LLM, however, which is unknown before execution. 


\subsection{Algorithm Design}\label{algorithm design}
We model the SSM selection as a multi-armed bandit (MAB) problem. 
Specifically, we divide the total speculative decoding process into a series of time slots, denoted by $\mathcal{T}=\{1,2,..., T\}$. At each time slot $t$, we decide the SSM selection for each request $i$, denoted by $x_{i}(t)\in \mathcal{M}$, and the corresponding real-time goodput is $r_{i,x(t)}$. Therefore, we have $g_{i,x_{i}} = \mathbb{E}[r_{i,x_{i}(t)}]$.

The bandit-based algorithm for the SSM selection explores different SSMs to estimate the goodput, however, which incurs additional switching cost due to the re-computation of KV values of tokens already generated. The KV re-computation cost, which is denoted as $c_{i,j}(t)$, of request $i$ on SSM $j$ increases over time slots as more tokens are generated~\cite{CachedAttention}. We denote $z_{i}(t) = c_{i,x_{i}(t)}(t)\mathbbm{1}_{x_{i}(t)\neq x_{i}(t-1)}$, where $\mathbbm{1}_{A}$ is an indicator function for condition $A$. The cumulative regret to quantify the SSM selection performance can be expressed as:
\begin{align}
    &\mathcal{R}(T)\!=\!
    \underbrace{\sum_{i=1}^{N}(Tg_{i,x_{i}^{*}}- \sum_{t=1}^{T}\!\mathbb{E}[r_{i,x_{i}(t)}])}_{\text{Goodput regret}} + \lambda\underbrace{\sum_{t=1}^{T}\sum_{i=1}^{N}z_{i}(t)}_{\text{Switching cost}},
\end{align}
where $\lambda$ is a parameter to balance the goodput regret and switching cost. Then, the optimization problem can be re-formulated as:
\begin{align}
    \min_{x}\quad & \mathcal{R}(T) \label{regret_opt}\\
    \mbox{s.t.}\quad & |\mathcal{N}_{j}(t)|\leq B_{j}, \forall j \in \mathcal{M}, t\in \mathcal{T};\label{cons_2}\\
    & x_{i}(t)=\{1,2,...,M\}, \forall i \in \mathcal{N}, t\in \mathcal{T}.
\end{align}
where $\mathcal{N}_{j}(t)=\{i|x_{i}(t)=j\}$ is the set of requests running on SSM $j$ at time slot $t$.

\begin{figure}[t] 
\begin{center}
\includegraphics[width=\linewidth]{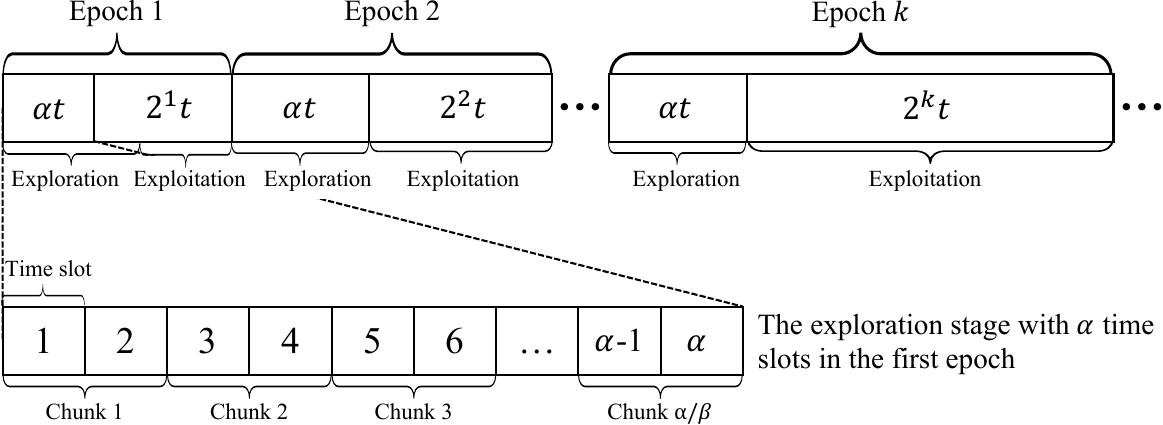}
\caption{\label{fig:epoch}The illustration of the epoch division. In each exploration stage, we further group time slots into chunks to reduce the SSM switching frequency. In this example, we set the chunk size $\beta$ as 2.}
\end{center}
\end{figure}

The proposed algorithm for solving the problem (\ref{regret_opt}) is presented in \autoref{alg}. We have no assumption about the length of $\mathcal{T}$, which is divided into epochs, as shown in \autoref{fig:epoch}. Each epoch includes two stages:

\noindent\textbf{Exploration Stage:} The exploration stage contains $\alpha$ time slots. In this stage, we randomly select SSMs for inference requests and observes the goodput $r_{i,j}(t)$. However, random selection would lead to frequent SSM switching with high cost. Therefore, we design a chunk-based exploration approach. Specifically, time slots in the exploration stage are further grouped into chunks. Each request $i$ uses the same SSM for all time slots within a time chunk, reducing the SSM switching frequency. We denote chunk size, i.e., the number of time slots in a chunk, as $\beta$. Thus, there are $\alpha/\beta$ time chunks in the exploration stage. At the end of each time slot $t$ in the exploration stage, request $i$ observes the throughput feedback $r_{i,j}(t)$. After the exploration stage at epoch $k$, we can get the average value of goodput for request $i$ on SSM $j$ as $\tilde{g}_{i,j}^{k}$.

\begin{algorithm}[t]
\caption{Learning-based SSM Selection (LBSS)}
\label{alg}
\begin{algorithmic}[1]
\FOR{epoch $k=1,2,...$}
    \STATE \textbf{Exploration Stage:} Estimate $\tilde{g}_{i,j}^{k}$ using \autoref{RS};
    \STATE \textbf{Exploitation Stage:} Select SSM by solving (\ref{matching_problem});
    \STATE Run requests on SSMs according to the decision in exploitation for $2^{k}$ time slots;
\ENDFOR
\end{algorithmic}
\label{tab1}
\end{algorithm}
\begin{algorithm}[t]
\caption{Chunk-based Exploration in Epoch $k$}
\label{RS}
\begin{algorithmic}[1]
\STATE \textbf{Initialization:} $t=0$;
\WHILE{$t\leq \alpha$}
    \FOR{$i=1,2,..., N$}
        \STATE Randomly select a SSM $j$;
        \STATE $x_{i}(t)=j$;
    \ENDFOR
    \FOR{$j=1,2,..., M$}
        \IF{$|\mathcal{N}_{j}|> B_{j}$}
            \STATE Randomly choose $B_{j}$ requests from $\mathcal{N}_{j}$ and drop others;
        \ENDIF
    \ENDFOR
    \STATE \emph{/*Chunk-based exploration*/}
    \FOR{$p=1,2,..., \alpha/\beta$}
        \STATE Run request $i$ on SSM $x_{i}(t)$ for $\beta$ time slots;
        \STATE Observe throughput at each time slot;
    \ENDFOR
\ENDWHILE
\STATE Estimate the goodput $\tilde{g}_{i,j}^{k}$ by averaging cumulative throughput;
\end{algorithmic}
\end{algorithm}

\noindent\textbf{Exploitation Stage:} We decide the best SSM for each inference request based on the estimated goodput $\tilde{g}_{i,j}^{k}$ by solving the following problem:
\begin{align}
    \max_{y} \quad &\sum_{i=1}^{N}\sum_{j=1}^{M}\tilde{g}_{i,j}^{k}y_{i,j},\label{matching_problem}\\
    \mbox{s.t.}\quad&\sum_{i=1}^{N}y_{i,j}\leq B_{j}, \forall j\in \mathcal{M};\\
    &y_{i,j}=\{0,1\}, \forall i\in \mathcal{N}, j\in\mathcal{M},
\end{align}
where $y_{i,j}$ is a binary variable indicating whether request $i$ is assigned to SSM $j$. The problem (\ref{matching_problem}) can be modeled as a bipartite graph matching problem between requests and SSMs. Since each SSM can accommodate a batch of requests, we can regard each SSM $j$ as having $B_{j}$ replicas, each capable of serving a request. The weight between the request and the SSM is the estimated goodput. The objective is to maximize the total weights. We use the KM algorithm to find a maximum-weight match $\pi^{k}$. It is important to note that $\pi^{k}$ is the best match based on the estimated goodput but may not indicate the optimal SSM selection since the estimated goodput could be inaccurate. According to the match $\pi^{k}$, each request is processed on the selected SSM for $2^{k}$ time slots. We use exponential duration for exploitation because as $k$ increases, the precision of goodput estimations improves, enhancing the probability of the optimal SSM selection.



\begin{theorem}
    The total regret $\mathcal{R}(T)$ is bounded by $\mathcal{O}(\log_{2}T)$.
\end{theorem}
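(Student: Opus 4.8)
The plan is to bound the two pieces of $\mathcal{R}(T)$ — the goodput regret and the switching cost — separately, exploiting the epoch structure of \autoref{alg}. First I would observe that since exploitation in epoch $k$ runs for $2^k$ time slots, the number of epochs $K$ needed to cover $T$ slots satisfies $\sum_{k=1}^{K}(\alpha + 2^k) \geq T$, so $2^{K+1} \gtrsim T$ and hence $K = \mathcal{O}(\log_2 T)$. This is the crux of why the final bound is logarithmic rather than polynomial in $T$: all per-epoch costs will be shown to be $\mathcal{O}(1)$ (independent of $T$), and there are only $\mathcal{O}(\log_2 T)$ epochs.

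For the switching cost, the key point is that switches only occur at chunk boundaries within the exploration stage and at the single exploration-to-exploitation transition of each epoch. The exploration stage has a fixed length $\alpha$ with $\alpha/\beta$ chunks, so per epoch there are at most $N(\alpha/\beta + 1)$ switches; the per-switch KV-recomputation cost $c_{i,j}(t)$ is bounded by a constant $c_{\max}$ (the cost of recomputing KV for the maximum context length, which is finite in any real deployment). Hence the total switching cost is at most $\lambda \cdot c_{\max} \cdot N(\alpha/\beta+1) \cdot K = \mathcal{O}(\log_2 T)$.

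For the goodput regret, the standard argument is that in each epoch the regret splits into (i) the exploration stage, contributing at most $\alpha \cdot N \cdot g_{\max}$ per epoch, which is $\mathcal{O}(1)$ per epoch and thus $\mathcal{O}(\log_2 T)$ overall; and (ii) the exploitation stage, where regret is incurred only if the KM-matching $\pi^k$ computed from the estimates $\tilde g_{i,j}^k$ differs from the optimal matching $x^*$. Here I would use a concentration bound (Hoeffding/Azuma on the chunk-averaged goodput observations, of which there are $\alpha/\beta$ per request–SSM pair) to show that $\Pr[\,|\tilde g_{i,j}^k - g_{i,j}| > \epsilon\,]$ decays, and then argue that once all estimates are within a margin $\Delta/2$ of their means — where $\Delta$ is the minimum goodput gap between the optimal matching and the next-best matching — the KM algorithm returns the optimal selection, so the exploitation stage of epoch $k$ incurs zero regret with high probability. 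Summing the failure probabilities (geometric in $\alpha/\beta$, or using that the exploration sample count is fixed per epoch) against the $2^k g_{\max} N$ worst-case regret of a bad epoch gives a convergent series contributing $\mathcal{O}(1)$, plus the unavoidable $\mathcal{O}(\log_2 T)$ from the exploration stages.

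The main obstacle I anticipate is the exploitation-stage analysis: because the exploration stage has \emph{fixed} length $\alpha$ (it does not grow with $k$), the estimates $\tilde g_{i,j}^k$ do not become arbitrarily accurate as $k \to \infty$, so one cannot drive the per-epoch failure probability to zero. Either the argument must assume $\alpha$ is chosen large enough relative to the gap $\Delta$ and the fluctuation range so that a single epoch's estimates already suffice w.h.p. (making the exploitation regret geometrically summable via the $2^k$ weighting against a constant failure probability $\delta < 1/2$), or the estimates must be accumulated across epochs rather than reset — in which case the concentration improves with $k$. I would adopt the former: treat $\alpha, \beta, \lambda$ as constants, invoke Hoeffding to get a per-epoch success probability $p$, and show $\sum_k 2^k g_{\max} N (1-p)^{\,\Theta(k)}$ or the analogous union-bounded series converges, so the dominant term remains the $\sum_{k=1}^{K} \alpha N g_{\max} = \mathcal{O}(\log_2 T)$ from exploration. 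Care is also needed that the random request-dropping in \autoref{RS} (lines 8–10) still yields enough samples per request–SSM pair in expectation; this only rescales constants.
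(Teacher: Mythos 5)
The paper itself gives no proof of this theorem -- it defers entirely to an external technical report -- so there is nothing in the source to compare your argument against; I can only assess your proposal on its own terms. Your overall architecture (count $K=\mathcal{O}(\log_2 T)$ epochs from the doubling exploitation lengths, charge $\mathcal{O}(1)$ per epoch for exploration and for switching, and show the exploitation stages contribute $\mathcal{O}(1)$ in total via concentration of $\tilde g_{i,j}^k$ and a gap argument for the KM matching) is the standard and correct skeleton for a phased explore--exploit bandit, and you have correctly isolated the delicate point: the exploration budget per epoch is a fixed constant $\alpha$, so the estimates do not sharpen for free.

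However, the resolution you say you would adopt -- ``the former,'' i.e.\ fresh per-epoch estimates with a constant per-epoch success probability $p$ -- does not close the argument, and the series you write down quietly assumes the other option. If epoch $k$'s estimate uses only that epoch's $\alpha$ exploration slots, the failure event has probability $\delta=1-p$ bounded away from $0$ \emph{independently of $k$}, and a failed epoch costs up to $2^k N g_{\max}$ in exploitation regret; the expected exploitation regret is then $\sum_{k\le K}\delta\, 2^k N g_{\max}=\Theta(\delta T)$, which is linear in $T$, not logarithmic. Your expression $\sum_k 2^k g_{\max} N (1-p)^{\Theta(k)}$ converges only because of the $\Theta(k)$ exponent, and that exponent is available only if $\tilde g_{i,j}^k$ aggregates samples from all exploration stages up to epoch $k$ (so each request--SSM pair has $\Theta(k\alpha/M)$ observations and Hoeffding gives failure probability $e^{-c k}$), \emph{and} $\alpha/\beta$ is chosen large enough relative to the minimum matching gap $\Delta$ and the reward range that $c>\ln 2$. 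So you must commit to the ``latter'' route (cumulative estimates, which is in fact what \autoref{RS}'s ``averaging cumulative throughput'' suggests) together with a gap-dependent lower bound on $\alpha$; with that fix, plus your (reasonable but worth stating) assumptions that the per-switch cost $c_{i,j}(t)$ is bounded by a constant $c_{\max}$ via a maximum context length and that the random request-dropping only rescales the expected sample counts, the bound $\mathcal{R}(T)=\mathcal{O}(\log_2 T)$ goes through.
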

\begin{proof}
    Please refer to the technical report~\cite{report} for the proof.
\end{proof}

\subsection{Fast SSM Switching}
We propose a fast SSM switching mechanism to further reduce the SSM switching cost during the exploration-exploitation process. The key insight is that newly speculative tokens cannot affect the KV of tokens already generated.
Therefore, we can pre-compute the KV cache of all existing tokens on the destination SSM, in parallel with the speculation on the source SSM.


To realize fast SSM switching, it is necessary to determine the destination SSM for switching in advance. During the exploration stage, the destination SSM can be easily determined because it is randomly selected. However, during the exploitation stage, the SSM selection is decided by a matching algorithm based on estimated goodput, making it challenging to determine the destination SSM in advance. A straightforward solution is to pre-compute the KV cache for all SSMs, but it introduces significant computation redundancy and wastes resources. To address this challenge, we choose the SSM with the highest estimated goodput for KV cache re-computation. Furthermore, \textsc{Spin} runs such KV cache re-computation during the idle time of SSMs, minimizing negative impact to the speculation on SSMs.





\section{\textsc{Spin}'s Runtime Engine}\label{engine}

In this section, we present \textsc{Spin}'s speculative decoding engine, which includes a fast batch verification module, and a pipeline processing module. 

\subsection{Fast Batch Verification}
\begin{figure}[t] 
\begin{center}
\includegraphics[width=\linewidth]{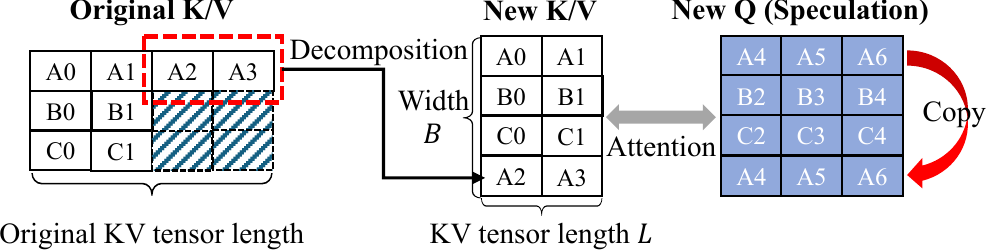}
\caption{\label{fig:decomposition}The illustration of the self-attention computation with request decomposition.}
\end{center}
\end{figure}

The key idea of the fast batch verification is to reshape KV tensors by ripping off overlong parts of some requests and stitching them with short ones, so that we can reduce the padded tokens.
\autoref{fig:decomposition} illustrates a simple example of three requests, where tokens in white color are prompts, e.g., $(A0, A1, A2, A3)$, and the ones in blue color are candidate tokens, e.g., $(A4, A5, A6)$. In order to align KV tensors for self-attention computation, existing inference engines~\cite{fang2021turbotransformers, kwon2023efficient} insert
four padding tokens into requests $B$ and $C$. In contrast, \textsc{Spin} decomposes request $A$ into two sub-requests so that it aligns with the other two requests without padding tokens.

However, decomposing requests could lead to incorrect computation results in the self-attention layer. Typically, the inference engine calculates the self-attention outputs for each request in the batch separately using (\ref{eq-Oa}). When a request is divided into multiple parts that locate in different rows, the denominator of (\ref{eq-Oa}) cannot summing all the $\mathcal{F}(Q_{i},K_{j})$ belonging to the same request, where $i$ and $j$ are indices of different tokens.


In order to ensure the correctness of self-attention computation, we modify the self-attention computation in Eq.~\eqref{eq-Oa} for request $S$ as follows:
\begin{align}
    &a_{i,j} = \frac{\mathcal{F}(Q_{i}, K_{j})}{\sum\limits_{j, r_{j}\in D}\mathcal{F}(Q_{i}, K_{j}) I_{j,S}},
\end{align}
where $S$ is a request before decomposition and $D$ is set of all tokens in the batch. We use the binary indicator $I_{j, S}$ to represent whether $r_{j}$ belongs to the request $S$.
Although the number of padding tokens in KV tensors can be reduced, it requires additional copies of tokens in the $Q$ tensor, as shown in \autoref{fig:decomposition}. 
To balance the redundancy in $KV$ and $Q$ tensors, \textsc{Spin} carefully determines the request decomposition by setting the $KV$ tensor length $L$ and width $B$. The $Q$ tensor has the same width $B$ by replicating some rows accordingly. The length of $Q$ tensor is typically fixed by the prediction window, a system parameter of runtime engine, whose discussion is orthogonal to this work. We first fix tensor width $B$, which limits the overhead from $Q$ tensor as well as the number of decomposed requests. Then, we search for the $L$ that results in the minimum number of padding tokens in $KV$ tensor. Note that the value of $L$ does not have a linear relationship with the number of padding tokens, but this search process can be completed quickly.



\subsection{Speculative Decoding Pipeline} \label{system_idea}

To reduce idle time caused by synchronizing heterogeneous SSMs, the speculative decoding pipeline module divides the original batch on each SSM into multiple micro-batches, so that speculation and verification can be performed on these micro-batches in a pipeline manner, as illustrated in \autoref{fig:moti_pipeline}(b). Specifically, after SSMs generate speculative tokens for a micro-batch, these results are sent to the LLM for verification. Meanwhile, SSMs can begin speculation on the next micro-batch, running in parallel with the LLM's verification. Since micro-batches involve lighter workloads for the SSMs, their speculation time is reduced, effectively minimizing the idle time of the LLM during synchronization of results from the SSMs.

A critical design choice of the speculative decoding pipeline is to decide micro-batch size to make the pipeline saturated to maximize resource utilization.
If the micro-batch size is too large, the idle time of the LLM due to synchronizing results from heterogeneous SSMs cannot be effectively reduced. On the other hand, while a small micro-batch size can effectively reduce idle time, the GPU resources are under-utilized during the LLM verification phase. 

\textsc{Spin} employs a simple heuristic to determine the optimal micro-batch setting on each SSM. Specifically, \textsc{Spin} iteratively splits the batch on each SSM into multiple micro-batches and checks whether the LLM can be fully utilized with the current setting. Specifically, in the initial iteration, each SSM obtains a batch of inference requests according to the SSM selection decision. Then, we split the batch on each SSM into $b_{0}$, e.g., $b_{0}=2$, micro-batches and check whether there is a obvious throughput drop. If there is no significant degradation, we continue to split the batch into $b_{0}+1$ micro-batches, which would further reduce synchronization time. This process repeats until a significant degradation in inference throughput is observed. It is important to note that we can offline profile the inference throughput of the LLM with different workloads so that we can quickly check it with different micro-batch settings, instead of running a verification execution and observing the throughput. Of course, we can also formulate the batch splitting as an optimization problem and get the optimal solution by solving it. However, we find such a heuristic algorithm works sufficiently well in practice and the algorithm overhead can be neglected. 

\section{Performance Evaluation}\label{evaluation}
\subsection{Experimental Settings}
\noindent\textbf{Model and Environment.} We evaluate the performance of \textsc{Spin} using the LLaMA family of models. We consider LLaMA-7B, LLaMA-13B, and LLaMA-30B as LLMs, respectively. For SSMs, we use five smaller models: LLaMA-68M, LLaMA-265M, LLaMA-616M, LLaMA-1.1B, and LLaMA-1.4B. All models are obtained from~\cite{huggingface}. We deploy each SSM on a single V100 GPU. For the LLM, we use a single V100 GPU for the LLaMA-7B model, 2xV100 GPUs for the LLaMA-13B model, and 4xV100 GPUs for the LLaMA-30B model. The multiple-GPU setting uses tensor parallelism. For software configuration, we use Ubuntu 20.04 with Linux Kernel version 5.15, CUDA 11.7, and cuDNN 8.6.0, along with the NVIDIA driver version 525.85. 

\noindent\textbf{Workloads.} We evaluate \textsc{Spin} using three public datasets: Alpaca~\cite{alpaca}, ChatGPT Prompts~\cite{cp}, and Chatbot Instruction Prompts~\cite{cip}. Aligning with the settings in  \cite{miao2024specinfer, wang2024minions}, we use the questions/prompts from these datasets to create workloads. Additionally, we combine the requests from all three datasets to form a mixed dataset to better simulate real-world traces, referred to as Mix.

\noindent\textbf{Baselines.} We compare \textsc{Spin} with the Vanilla speculative decoding inference system with homogeneous SSMs, which is commonly adopted by existing works~\cite{cai2024medusa, wang2024minions, li2024specpim}. We equip this Vanilla system with different SSMs in experiments.

\begin{figure*}[t]
\centering
        \subfigure[Alpaca]{
            \begin{minipage}[b]{0.22\textwidth}
            \includegraphics[width=\textwidth]{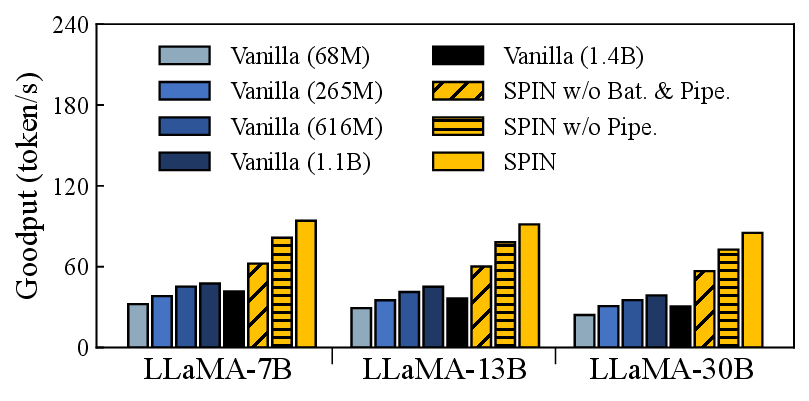}
            \end{minipage}
        \label{fig:ssm_alpaca}
            }
            \subfigure[CP]{
    		\begin{minipage}[b]{0.22\textwidth}
    		\includegraphics[width=\textwidth]{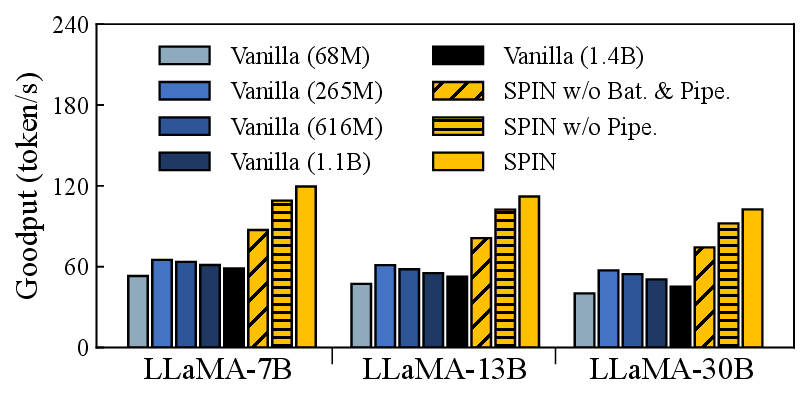} 
                \end{minipage}
    		\label{fig:ssm_cp}
    	}
            \subfigure[CIP]{
    		\begin{minipage}[b]{0.22\textwidth}
    		\includegraphics[width=1\textwidth]{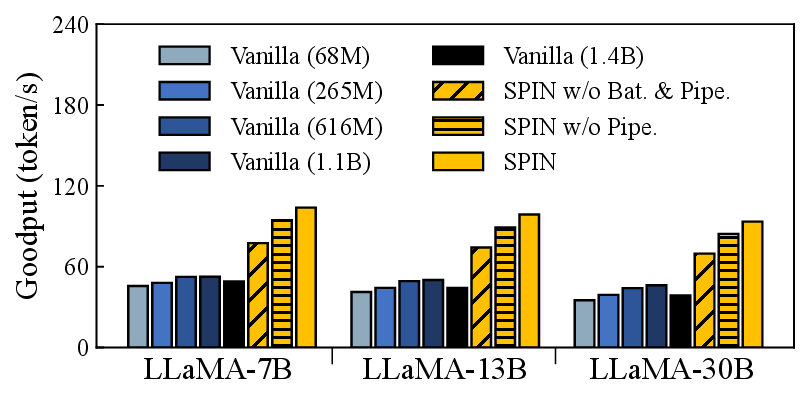} 
                \end{minipage}
    		\label{fig:ssm_cip}
    	}
            \subfigure[Mix]{
    		\begin{minipage}[b]{0.22\textwidth}
    		\includegraphics[width=1\textwidth]{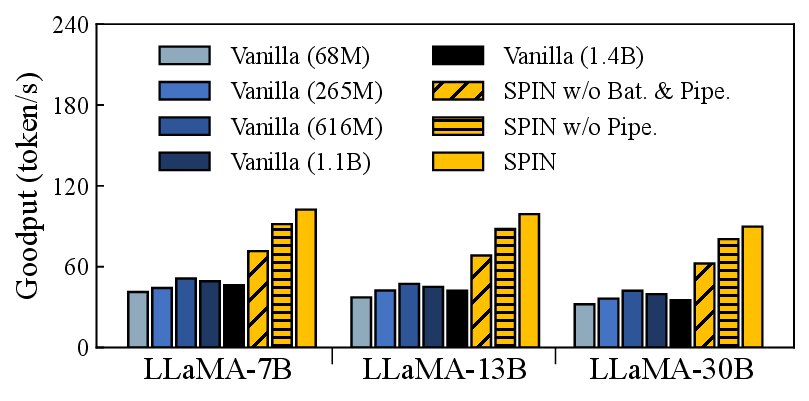} 
                \end{minipage}
    		\label{fig:ssm_mix}
    	}
    \caption{\label{fig:performance_overall}The evaluation of \textsc{Spin} on different dataset.}
    \end{figure*}
\subsection{Results}
\subsubsection{Overall Performance}
We first evaluate the overall performance of \textsc{Spin} by comparing it with the Vanilla system. We set the batch size to 8 and evaluate \textsc{Spin} in three settings: (1) \textsc{Spin} w/o bat. \& pipe.: disabling fast batch verification and pipeline optimization, only enabling heterogeneous SSM speculation; (2) \textsc{Spin} w/o pipeline: disabling pipeline optimization but adopting optimized heterogeneous speculation and fast batch verification; and (3) \textsc{Spin}: the complete implementation of \textsc{Spin}. We use goodput, the number of accepted tokens per second, as the evaluation metric. The results are shown in \autoref{fig:performance_overall}. The results show that \textsc{Spin} significantly improves goodput compared to the baselines. For example, on the Alpaca dataset with LLaMA-7B as the LLM, \textsc{Spin} improves goodput by about 2.34$\times$ compared to other baselines.

When only speculation of heterogeneous SSMs is enabled, \textsc{Spin} still achieves high performance. For example, on the Alpaca dataset with LLaMA-7B, the Vanilla speculative decoding using only LLaMA-68M achieves a goodput of 32.21 tokens per second, whereas \textsc{Spin} generates 62.33 tokens per second on average. Enabling the fast batch verification and pipeline optimizations further enhances \textsc{Spin}'s performance. For example, on the CP dataset with LLaMA-7B, enabling the fast batch verification and pipeline optimizations increases \textsc{Spin}'s performance improvement from 1.45$\times$ to 1.81$\times$ and 1.99$\times$, respectively.

In addition, we observe all baselines perform variably with different datasets. For example, the Vanilla method with LLaMA-265M works best on the CP dataset compared to others, while the Vanilla system with LLaMA-1.1B outperforms other baselines on the Alpaca dataset, as shown in \autoref{fig:ssm_alpaca} and \autoref{fig:ssm_cp}. This is because the inference requests on the Alpaca dataset exhibit higher difficulty, requiring larger SSMs to ensure the acceptance rate. In contrast, inference requests in the CP dataset can benefit more from smaller SSMs due to their lower difficulty. This observation can be also verified by the preliminary experiments in \S\ref{motivation}. However, we observe that the Vanilla speculative decoding systems with both LLaMA-68M and LLaMA-1.4B cannot outperform others and consistently show the worst performance. This can be attributed to the lowest acceptance rate with LLaMA-68M and the highest inference speed with LLaMA-1.4B.

\begin{figure*}[t]
\centering
        \subfigure[Alpaca]{
            \begin{minipage}[b]{0.22\textwidth}
            \includegraphics[width=\textwidth]{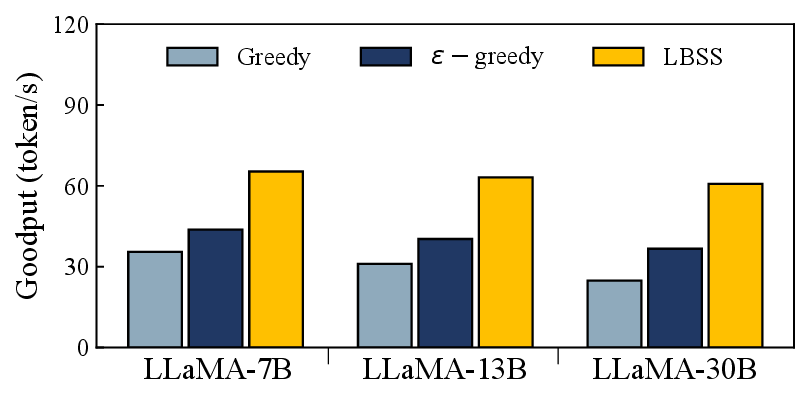}
            \end{minipage}
        \label{fig:alg_alpaca}
            }
            \subfigure[CP]{
    		\begin{minipage}[b]{0.22\textwidth}
    		\includegraphics[width=\textwidth]{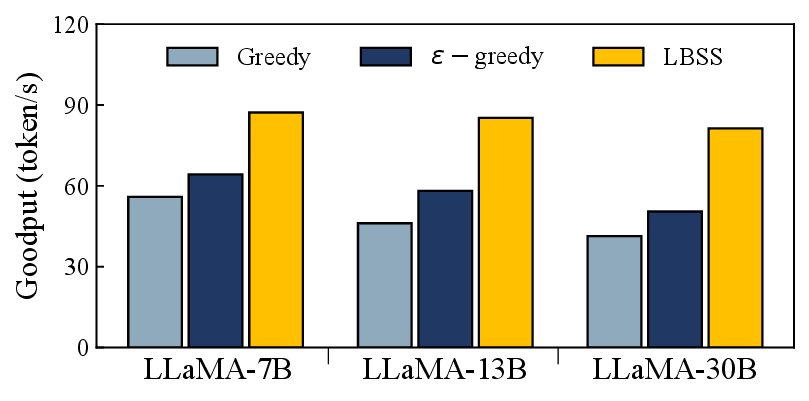} 
                \end{minipage}
    		\label{fig:alg_cp}
    	}
            \subfigure[CIP]{
    		\begin{minipage}[b]{0.22\textwidth}
    		\includegraphics[width=1\textwidth]{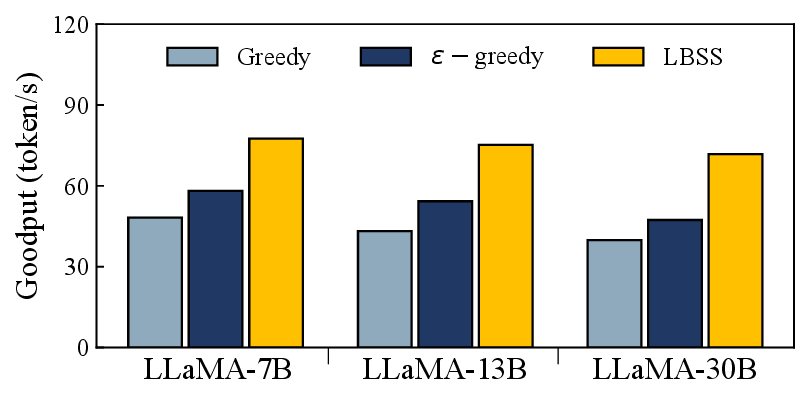} 
                \end{minipage}
    		\label{fig:alg_cip}
    	}
            \subfigure[Mix]{
    		\begin{minipage}[b]{0.22\textwidth}
    		\includegraphics[width=1\textwidth]{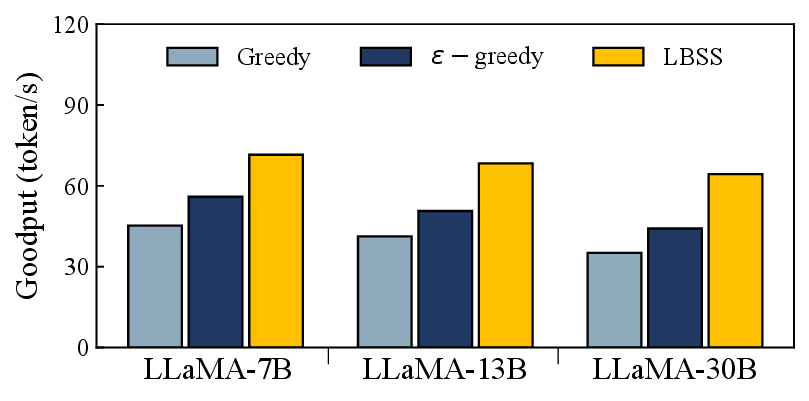} 
                \end{minipage}
    		\label{fig:alg_mix}
    	}
    \caption{\label{fig:performance_alg}The evaluation of \textsc{Spin} with different SSM selection algorithms.}
    \end{figure*}
\subsubsection{Impact of the SSM Selection Algorithm}
We compare the learning-based SSM selection (LBSS) algorithm used in \textsc{Spin} with two baselines: (1) Greedy: requests greedily select the SSMs according to their prompt lengths, where requests with shorter prompts are prioritized using smaller SSMs for speculation while others with longer prompts use the larger SSMs; and (2) $\epsilon$-greedy: with probability $\epsilon$, all inference requests use the best SSM according to the currently observed performance, and with probability $1-\epsilon$, they randomly select an SSM. We set $\epsilon=0.2$ for better exploration. Additionally, we set the batch size to 8 and disable the fast batch verification and pipeline optimization for all methods. The results are shown in \autoref{fig:performance_alg}. We observe that the proposed SSM selection algorithm in \textsc{Spin} significantly outperforms the others. For example, on the Alpaca dataset with LLaMA-7B as the LLM, the greedy method generates only 32.21 tokens per second. Although the $\epsilon$-greedy method can explore better SSM selection, it does not balance exploration and exploitation well, achieving a goodput of 43.78 tokens per second. In contrast, our LBSS effectively explores SSMs and selects the best one for inference requests, achieving a higher goodput of 65.31 tokens per second.

In addition, we observe that SSM selection becomes more crucial with larger LLMs. For example, on the CP dataset, LBSS achieves a goodput improvement of about 1.45$\times$ with the LLM of LLaMA-7B compared to the other two methods, while the improvement increases to 1.79$\times$ when using LLaMA-30B as the LLM. This can be attributed to the fact that both greedy and $\epsilon$-greedy methods tend to choose unsuitable SSMs for inference requests since they cannot estimate the goodput of SSMs accurately, resulting in significant performance degradation.

\begin{figure*}[t]
\centering
        \subfigure[Alpaca]{
            \begin{minipage}[b]{0.22\textwidth}
            \includegraphics[width=\textwidth]{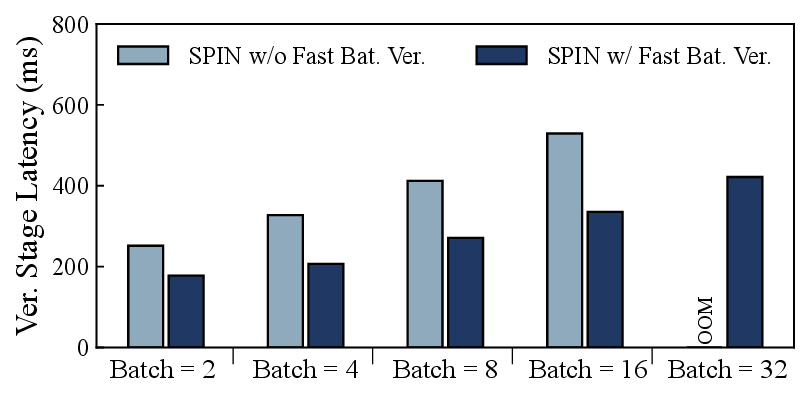}
            \end{minipage}
        \label{fig:batching_alpaca}
            }
            \subfigure[CP]{
    		\begin{minipage}[b]{0.22\textwidth}
    		\includegraphics[width=\textwidth]{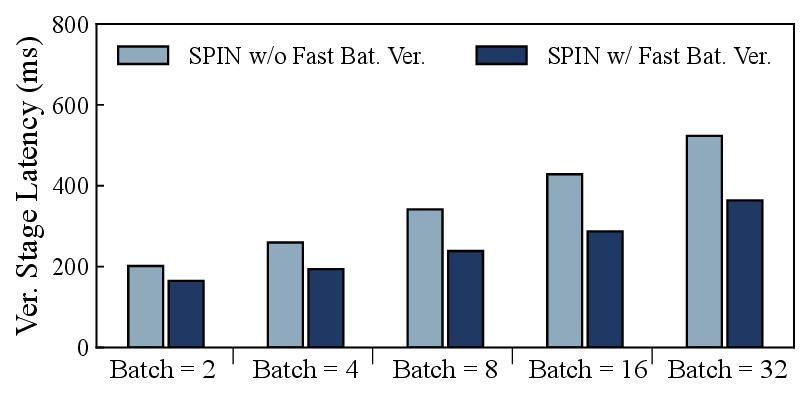} 
                \end{minipage}
    		\label{fig:batching_cp}
    	}
            \subfigure[CIP]{
    		\begin{minipage}[b]{0.22\textwidth}
    		\includegraphics[width=1\textwidth]{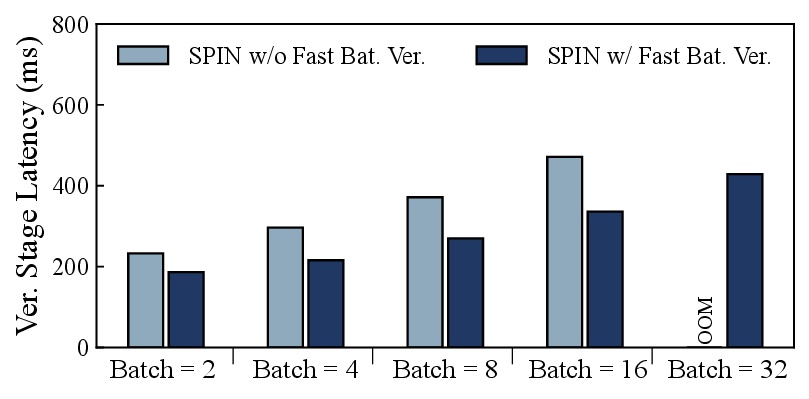} 
                \end{minipage}
    		\label{fig:batching_cip}
    	}
            \subfigure[Mix]{
    		\begin{minipage}[b]{0.22\textwidth}
    		\includegraphics[width=1\textwidth]{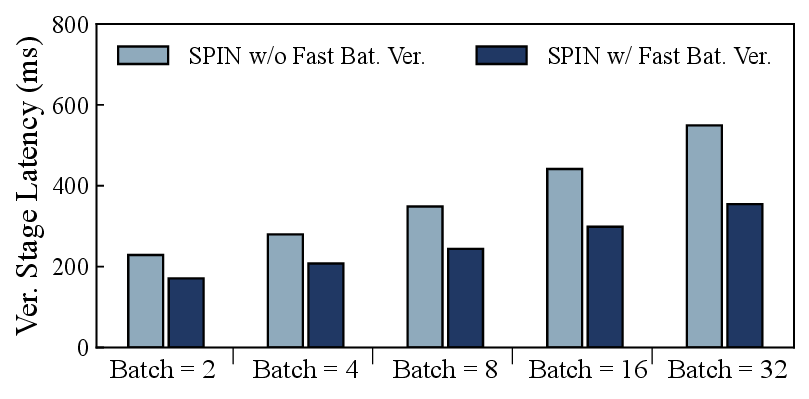} 
                \end{minipage}
    		\label{fig:batching_mix}
    	}
    \caption{\label{fig:performance_batching}The evaluation of the batch verification on different batch sizes.}
    \end{figure*}
\subsubsection{Impact of the Fast Batch Verification}
We study the impact of the proposed fast batch verification with different batch sizes, which adopts a request decomposing method to accelerate the verification on the LLM. Since the fast batch verification mainly works during the verification stage, we use the average verification latency as the evaluation metric. We use LLaMA-7B as the LLM. We change the batch size and show results in \autoref{fig:performance_batching}. We observe that the fast batch verification in \textsc{Spin} can significantly reduce verification latency and save memory. For example, on the Alpaca dataset with a batch size of 16, the fast batch verification in \textsc{Spin} reduces verification latency from 529.21ms to 335.46ms, achieving a reduction of 33.61\%.
When the batch size reaches 32, out-of-memory errors happen for the Alpaca and CIP datasets when the fast batch verification is disabled, because of too much memory occupation by padded tokens in batching. On the other hand, \textsc{Spin} can still work with low verification latency, thanks to its sophisticated request decomposition for memory saving. 

In addition, we observe the fast batch verification provides the most significant performance improvement on the Alpaca dataset. This is because the inference requests exhibit high difficulty, and the number of accepted tokens can vary significantly, leading to substantial padding tokens within the batch. In contrast, the benefits of the fast batch verification are smaller on the CP dataset because the requests have more similar acceptance rates, resulting in fewer padding tokens.

\begin{figure*}[t]
\centering
        \subfigure[Alpaca]{
            \begin{minipage}[b]{0.22\textwidth}
            \includegraphics[width=\textwidth]{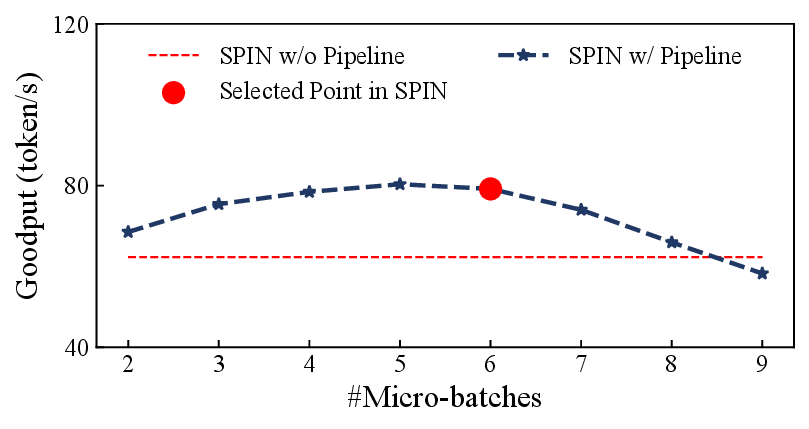}
            \end{minipage}
        \label{fig:pipeline_alpaca}
            }
            \subfigure[CP]{
    		\begin{minipage}[b]{0.22\textwidth}
    		\includegraphics[width=\textwidth]{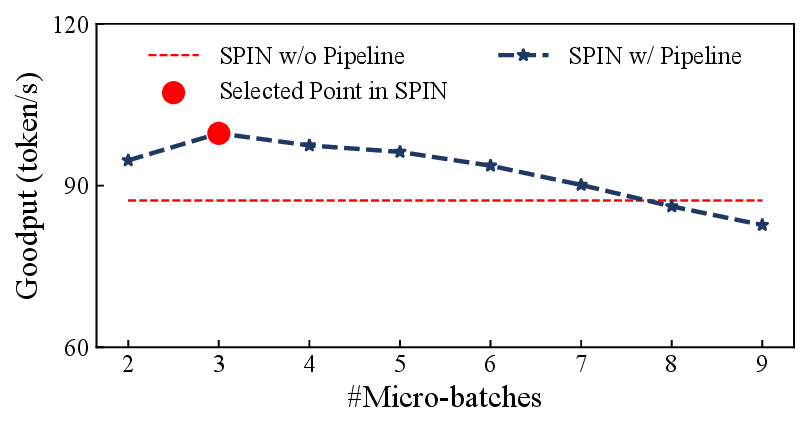} 
                \end{minipage}
    		\label{fig:pipeline_cp}
    	}
            \subfigure[CIP]{
    		\begin{minipage}[b]{0.22\textwidth}
    		\includegraphics[width=1\textwidth]{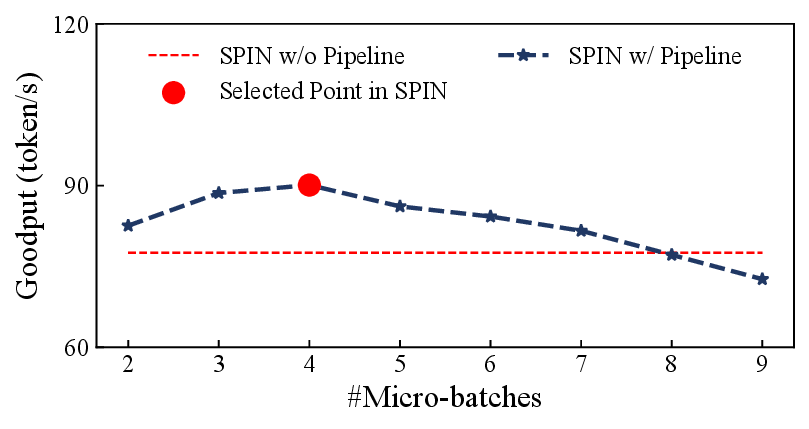} 
                \end{minipage}
    		\label{fig:pipeline_cip}
    	}
            \subfigure[Mix]{
    		\begin{minipage}[b]{0.22\textwidth}
    		\includegraphics[width=1\textwidth]{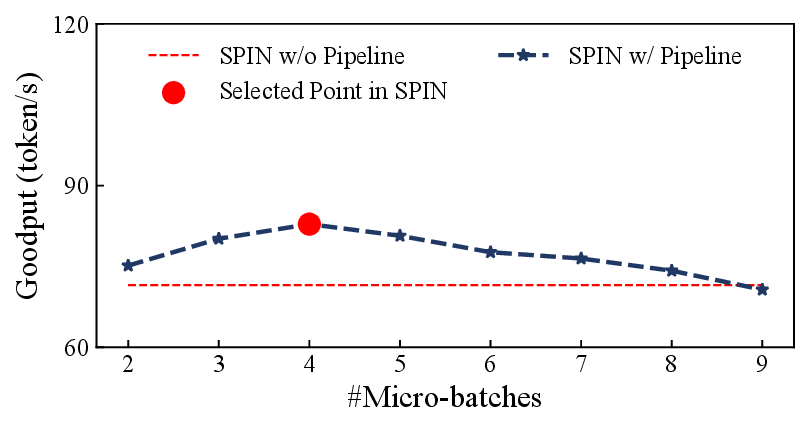} 
                \end{minipage}
    		\label{fig:pipeline_mix}
    	}
    \caption{\label{fig:performance_pipeline}The performance of speculative decoding pipeline in different numbers of micro-batches. The red dot indicates the setting used by \textsc{Spin}.}
    \end{figure*}

\subsubsection{Impact of the Speculative Decoding Pipeline}
Finally, we study the impact of the proposed speculative decoding pipeline mechanism, which divides the batch on each SSM into smaller micro-batches and enables pipeline execution of speculation and verification. We use the LLaMA-7B as the LLM and set different numbers of micro-batches on each SSM. The results are shown in \autoref{fig:performance_pipeline}. We observe that the benefits of the speculative decoding pipeline vary with different numbers of micro-batches. Specifically, the benefits increase with more micro-batches up to a certain point, and then degrade when the number of micro-batches continues to increase. For example, as shown in \autoref{fig:pipeline_alpaca}, the pipeline optimization with 4 micro-batches on each SSM can provide a performance improvement of 1.29$\times$ on the Alpaca dataset. However, the pipeline optimization does not work and even degrades speculative decoding performance when the number of micro-batches reaches 9. The rationale is as follows: With more micro-batches, the size of each micro-batch becomes smaller, reducing the variance in inference speed across SSMs. Thus, the LLM has a shorter idle time to synchronize the speculation results from SSMs to start the verification. However, when the number of micro-batches continues to increase, the speculation and verification of more requests are performed sequentially, leading to lower inference throughput.

We also observe that the optimal number of micro-batches varies across different datasets. For example, as shown in \autoref{fig:pipeline_alpaca}, the speculative decoding pipeline with 5 micro-batches works best on the Alpaca dataset, while the optimal number of micro-batches for the speculative decoding pipeline is 3 on the CP dataset, as shown in \autoref{fig:pipeline_cp}. Specifically, more inference requests in the Alpaca dataset perform the speculation on larger SSMs due to higher difficulty, such as LLaMA-1.1B and LLaMA-1.4B. Therefore, more micro-batches are needed to reduce synchronization time across SSMs. In contrast, for the CP dataset, more requests are processed by smaller SSMs, such as LLaMA-68M and LLaMA-265M, leading to smaller synchronization times across SSMs. Consequently, a smaller number of micro-batches can be sufficient to achieve optimal speculative decoding pipeline performance, and more micro-batches result in degraded inference throughput due to the sequential execution of micro-batches. Finally, we observe that the heuristic used in \textsc{Spin} to determine the number of micro-batches can approximate the optimal point well.

\section{Related Work}
\noindent\textbf{Speculative Decoding.} 
Recent works optimize the speculation phase on the SSM to enhance performance. For example, Fu et al.~\cite{fu2023lookahead} use an n-gram model to generate speculation tokens, while Eagle~\cite{li2024eagle} directly uses the embeddings from the LLM. Similarly, Medusa~\cite{cai2024medusa} trains multiple speculation heads using the output from the LLM as input. SpecInfer~\cite{miao2024specinfer} introduces the speculation tree to improve the acceptance rate of the SSM. For the verification phase, existing works mainly focus on reducing the LLM inference costs. For example, LayerSkip~\cite{elhoushi2024layer} and EESD~\cite{liu2024speculative} design the SSM as the partial of the LLM, enabling the intermediate results of the SSM to be reused in the verification phase. S3D~\cite{zhong2024s3d} adopts a mid-layer skipping method to reduce the verification cost. 
Recently, existing works propose to balance the speculation correctness and verification costs by adjusting the speculation length. For instance, Su et al.~\cite{su2023synergy} consider the relationship between the speculative window and batch size to maximize throughput, while BiLD~\cite{kim2024speculative} and Kangaroo~\cite{liu2024kangaroo} use a heuristic that stops speculation if the confidence in the speculative token falls below a predefined threshold. 
However, existing works merely focus on homogeneous SSMs, limiting their ability to serve inference requests with varying difficulty.

\noindent\textbf{Speculative Decoding based Serving Systems.} Recent works aim to design efficient LLM inference serving systems based on speculative decoding. SpecInfer~\cite{miao2024specinfer} employs a tree-based parallel decoding mechanism to reduce end-to-end latency. Minions~\cite{wang2024minions} introduce an execution pipeline mechanism to decouple SSM speculation and LLM verification using multiple inference batches, thus improving resource utilization. 
SmartSpec~\cite{liu2024optimizing} is a scheduling framework that determines the optimal speculation length for different requests. 
SpecExec~\cite{svirschevski2024specexec} deploys speculative decoding services on consumer hardware with RAM offloading. Although these works are promising for improving speculative decoding efficiency, they focus on serving a single inference request or a single batch using one SSM. Furthermore, they do not effectively address the challenge of the strict execution sequence between speculation and verification, which significantly degrades GPU utilization.

\section{Conclusion}
We design \textsc{Spin}, an efficient LLM inference serving system based on speculative decoding. \textsc{Spin} incorporates three novel designs. First, \textsc{Spin} exploits heterogeneous speculative models to serve inference requests with varying difficulty. Second, \textsc{Spin} accelerates batch verification by using the request decomposition method. Finally, \textsc{Spin} presents a speculative decoding pipeline mechanism by dividing the batch on each SSM into multiple micro-batches, further improving inference throughput. Extensive experiments demonstrate that \textsc{Spin} can improve inference throughput by about 2.28$\times$ compared to baseline methods.

\section{Acknowledgments}
This research was supported by Japan Society for the Promotion of Science Fellows No. 23KJ1786, Japan Science and Technology Agency (JST) PRESTO (No. 23828673), National Natural Science Foundation of China (No. 62471383, U23A20276 and U22A2029). Peng Li is the corresponding author.

\bibliographystyle{IEEEtran}
\bibliography{IEEEabrv,ref}

\end{document}